\documentclass[runningheads, a4paper]{llncs}

\usepackage[affil-it]{authblk}
\usepackage[english]{babel}
\usepackage{amsmath}
\usepackage{amssymb}
\usepackage{mathrsfs}
\usepackage{delarray}
\usepackage{tabularx}
\usepackage[noend]{algpseudocode}
\usepackage{environ}
\usepackage{authblk}
\usepackage{stmaryrd}
\usepackage{calc}
\usepackage{algorithm}
\usepackage{algpseudocode}

\usepackage{graphicx}

\usepackage{xcolor}

\usepackage{tikz}
\usetikzlibrary{arrows, automata, positioning, arrows.meta, calc}



\let\implies\Rightarrow
\newcommand{\B}{\mathbb{B}}

\newcommand{\NP}{$\text{NP}$}
\newcommand{\coNP}{$\text{coNP}$}

\newcommand{\parspacing}{}

\algblock{Input}{EndInput}
\algnotext{EndInput}
\algblock{Output}{EndOutput}
\algnotext{EndOutput}
\newcommand{\Desc}[2]{\State \makebox[4em][l]{#1}#2}

{\itshape}{\upshape}

\title{Reducing block-sequential automata networks to \emph{smaller}
parallel networks with isomorphic limit dynamics}
\date{}

\begin{document}

\title{Turning block-sequential automata networks into smaller
parallel networks with isomorphic limit dynamics}

\author{Pacôme Perrotin\inst{1,2}
  \and
  Sylvain Sen{\'e}\inst{1,2}
}

\titlerunning{Turning block-sequential ANs into smaller parallel networks}
\authorrunning{P. Perrotin, S. Sené}

\institute{
  Universit\'e publique, Marseille, France
  \and
  Aix-Marseille Univ., CNRS, LIS, Marseille, France
}

\maketitle

\begin{abstract}
We state an algorithm that, given an automata network and a block-sequential
update schedule, produces an automata network of the same size or smaller
with the same limit dynamics under the parallel update schedule.
Then, we focus on the family of automata cycles which share a unique path of
automata, called tangential cycles, and show that a restriction of our algorithm
allows to reduce any instance of these networks under a block-sequential update
schedule
into a smaller parallel network of the
family and to
characterize the number of reductions operated while conserving their limit
dynamics. We also show
that any tangential cycles reduced by our main algorithm are transformed into a 
network whose size is that of the largest cycle of the initial network. We end
by showing that the restricted algorithm allows the direct characterization of
block-sequential double cycles as parallel ones.
\end{abstract}

\section{Introduction}

Automata networks are classically used to model gene regulatory networks
\cite{J-Kauffman1969,J-Thomas1973}
\cite{J-Mendoza1998,J-Davidich2008,J-Demongeot2010}.
In these applications the dynamics of automata networks help to understand how the biological 
systems might evolve. 
As such, there is motivation in improving our computation and characterization 
of automata networks dynamics. 
This problem is a difficult one to approach considering the vast diversity of 
network structures, local functions and update schedules that are studied.
Rather than considering the problem in general, we look for families or 
properties which allow for simpler dynamics that we might be able to 
characterize~\cite{J-Demongeot2012,Gao2018}.\medskip

We are interested in studying the limit dynamics of automata networks, that is,
the limit cycles and fixed points that they adopt over time,
notably since these asymptotic behaviors of the underlying dynamical
systems may correspond to real biological phenomenologies such as the genetic
expression patterns of cellular types, tissues, or paces.
More precisely,
we are less
interested in the possible configurations themselves than in the information
that is being transfered and computed in networks over time.
As such, given families of networks, one of our
objectives is to count the fixed points and limit cycles
they possess.\medskip

In this paper, we provide an algorithm that, given an automata network and a
block-sequential
update schedule, produces an automata network of the same size or smaller
with isomorphic limit dynamics under the parallel update schedule.
After definitions in Section~\ref{sec-def}, this algorithm is detailed in
Section~\ref{sec-algo}.
In Section~\ref{sec-cycles},
the feasibility of the algorithm on \emph{Tangential Cycles} (TC)
is studied, a TC being a set of cycles that intersect on a shared path of
automata.\smallskip

\emph{Why focusing on TCs?}
Cycles are fundamental retroactive patterns that
are necessary to observe complex dynamics~\cite{Robert1980}.
They are present in many biological regulation networks~\cite{J-Thomas1981}
and are perfectly understood in isolation~\cite{dns10,T-Noual2012}.
In theory, cycles generate an exponential amount of limit cycles,
which is incoherent with the observed behavior of biological systems~\cite{J-Kauffman1969}.
The only way to reduce the amount of limit cycles
is to constrain the degrees of freedom induced by isolated cycles,
which can only be done by intersecting cycles from the purely
structural standpoint. This leads us naturally to TCs, as a simple
intersection case.
Double cycles (intersections of two isolated cycles)
in particular are the largest family of intersecting cycles
for which a complete characterization exists~\cite{T-Noual2012,BC-Demongeot2022};
the present paper generalizes this result to block-sequential update schedules.
Moreover, from the biological standpoint, double cycles are also observed
in biological regulation networks, in which they seem to serve as inhibitors
of their limit behavior~\cite{Demongeot2011}.

\section{Definitions}
\label{sec-def}

Let $\Sigma$ be a finite alphabet. We denote by $\Sigma^n$ the set of all words
of size $n$ over the alphabet $\Sigma$, such that for all $1 \leq i \leq n$ and
$x \in \Sigma^n$, $x_i$ is the $i$th letter of that word. 
An \emph{automata network (AN)} is a function $F:\Sigma^n \to \Sigma^n$, where $n$
is the size of the network. A configuration of $F$ is a word over $\Sigma^n$.
The global function $F$ can be divided into
functions that are local to each automaton: $\forall k, f_k : \Sigma^n \to \Sigma$,
and the global function can be redefined as the parallel application of
every local function: $\forall 1 \leq i \leq n, F(x)_i = f_i(x)$.
For convenience, the set of automata $\{1, \ldots, n\}$ is denoted by $S$,
and will sometimes be considered as a set of letters rather than numbers.
For questions of complexity, we consider that
\emph{local functions are always encoded as circuits.}
\parspacing

For $(i, j)$ any pair of automata, $i$ is said to \emph{influence} $j$ if
and only if
there exists a configuration $x \in \Sigma^n$ in which there exists a state
change of $i$ that changes the state of $f_j(x)$.
More formally, $i$ influences $j$ if and only if there exists
$x, x' \in \Sigma^n$ such that $\forall k, x_k = x'_k \Leftrightarrow k \neq i$
and $f_j(x) \neq f_j(x')$. \parspacing

It is common to represent an automata network $F$ as the digraph with
its automata as
nodes so that $(i, j)$ is an edge if and only if $i$ influences $j$.
This digraph
is called the \emph{interaction digraph} and is denoted by $G_I(F) = (S, E)$,
with $E$ the set of edges.
The automata network described in
Example~\ref{example-AN} is illustrated as an interaction digraph in Figure~\ref{fig-AN}.

\begin{example}
\label{example-AN}
Let $F:\B^3\to\B^3$ be an AN with local functions
\begin{align*}
f_a(x) &= \neg x_b \vee x_c \\
f_b(x) &= x_a \\
f_c(x) &= \neg x_b
\end{align*}

\end{example}

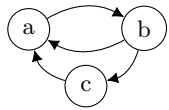
\begin{figure}[t!]
\centering
\begin{tikzpicture}  [-{Latex[length=1.5mm, width=1.5mm]}, node distance=.5cm]
	\node[state, minimum size=.55cm] (A) {a};
	\node[state, minimum size=.55cm] (C) [below right =of A] {c};
	\node[state, minimum size=.55cm] (B) [above right =of C] {b};

	\path
			(A) edge [bend left] (B)
			(B) edge [bend left] (C)
			(C) edge [bend left] (A)
			(B) edge [bend left] (A);

\end{tikzpicture}
\caption{Interaction digraph of the AN detailed in Example~\ref{example-AN}.}
\label{fig-AN}
\end{figure}

An \emph{update schedule} is an infinite sequence of non-empty subsets of $S$,
called blocks.
Such a sequence describes in which order the local functions are to be applied
to update the network, and there are uncountably infinitely many of them.
A \emph{periodic update schedule} is an infinite periodic sequence of non-empty
subsets of $S$, which we directly define by its period.
The application of an update schedule on a configuration of
a network is the parallel application of the local functions of the subsets
in the sequence, each subset being applied one after the other. \parspacing


For example, the sequence
$\pi = ( S )$ is the parallel update schedule. It is periodic,
and its application
on a configuration is undistinguishable from the application of $F$.
The sequence $(\{1\}, \ldots, \{n\})$ is also a periodic update schedule,
and implies the application of every local
function in order, one at a time. \parspacing

Formally, the application of a periodic update schedule $\Delta$ to a
configuration $x \in \Sigma^n$ is denoted by the function $F_\Delta$, and is defined
as the composition of the
applications of
the local functions in the order specified by $\Delta$.
For any subset $X \subseteq S$, updating $X$ into $x$ is denoted by $F_X(x)$ and
is defined as

\begin{equation*}
	\forall i \in S,\ F_X(x)_i = \begin{array}\{{ll}.
  		f_i(x) & \text{if } i \in X \\
		x_i & \text{otherwise}
	\end{array}\text{.}
\end{equation*}

Example~\ref{example-MMJ} provides an example of the execution of the network
detailed in Example~\ref{example-AN} under some non-trivial update schedule.

\begin{example}
\label{example-MMJ}
Let $\Delta = (\{b, c\}, \{a\}, \{a, b\})$ be a periodic update schedule,
and let $x = 000$ be an initial
configuration. For $F$ the AN detailed in Example~\ref{example-AN}, we have that:
\begin{align*}
  F_{\Delta}(000) &= (F_{\{a, b\}} \circ F_{\{a\}} \circ F_{\{b, c\}}) (000) \\
                  &= (F_{\{a, b\}} \circ F_{\{a\}}) (001) \\
                  &= F_{\{a, b\}} (101) = 111.
\end{align*}


\end{example}

A \emph{block-sequential update schedule} is a periodic update schedule where
all the
subsets in a period form a partition of $S$; that is, every automaton
is updated exactly once in the sequence.
If every subset in the sequence is of cardinality $1$, the update schedule is
said to be sequential.
For any AN with automata $S$, both the parallel update schedule and the
$|S|!$ different sequential update schedules are block-sequential.
Block-sequential update schedules are \emph{fair} update schedules, in
the sense that applying it updates
each automaton the same amount of times.\parspacing

The application of a
block-sequential update schedule on an AN
can be otherwise represented as an update digraph, introduced
in~\cite{salinas2008estudio,J-Aracena2009}.
For $F$ an AN and $\Delta$ a block-sequential
update schedule, the \emph{update digraph} of $F_\Delta$, denoted by $G_U(F_\Delta)$,
is an annotation of the network's interaction digraph, where any edge
$(u, v)$ is annotated with $<$ if $u$ is updated strictly before $v$ in
$\Delta$, and with $\geqslant$ otherwise. An update digraph of the AN
detailed in Example~\ref{example-AN} is illustrated in Figure~\ref{fig-UD}. \parspacing

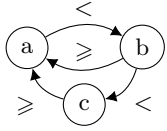
\begin{figure}[t!]
\centering
\begin{tikzpicture}  [-{Latex[length=1.5mm, width=1.5mm]}, node distance=.5cm]
	\node[state, inner sep=3pt, minimum size=.55cm] (A) {a};
	\node[state, inner sep=3pt, minimum size=.55cm] (C) [below right =of A] {c};
	\node[state, inner sep=3pt, minimum size=.55cm] (B) [above right =of C] {b};

	\path
			(A) edge [bend left] node [above] {$<$} (B)
			(B) edge [bend left] node [below right] {$<$} (C)
			(C) edge [bend left] node [below left] {$\geqslant$} (A)
			(B) edge [bend left] node [above] {$\geqslant$} (A);

\end{tikzpicture}
\caption{Update digraph of the AN detailed in Example~\ref{example-AN},
for $\Delta = (\{a\}, \{b\}, \{c\})$.}
\label{fig-UD}
\end{figure}

Given an automata network $F$ and a periodic update schedule $\Delta$,
we define the
\emph{dynamics} of $F_\Delta$ as the digraph
with all configurations $x \in \Sigma^n$
as nodes, so that $(x, y)$ is an edge of the dynamics if and only if
$F_\Delta(x) = y$.
We call \emph{limit cycle of length $k$}
any sequence of unique
configurations $(x_1, x_2, \ldots, x_k)$ such that $F_\Delta(x_i) = x_{i + 1}$
for all $1 \leq i < k$, and $F_\Delta(x_k) = x_1$. A limit cycle of length
one is called a \emph{fixed point}.
The \emph{limit dynamics} of $F_\Delta$ is the subgraph
which contains only the limit cycles and the fixed points of the dynamics.
The limit dynamics of the network defined in Example~\ref{example-AN} are
emphasized in Figure~\ref{fig-DYN}. \parspacing

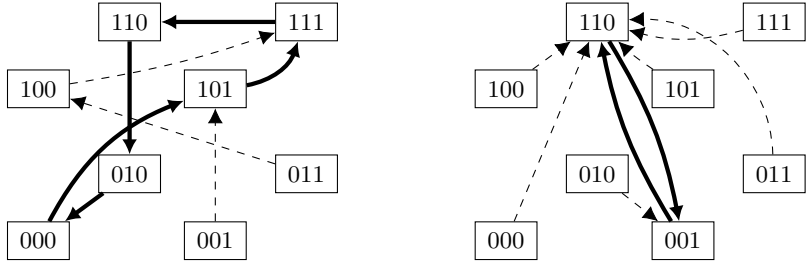
\begin{figure}[t!]
\centering
\tikzset{configuration/.style = {state, rectangle, minimum height=0.5cm}}
\begin{minipage}{.495\textwidth}
\centering
\begin{tikzpicture} [node distance=1.5cm]
	\node[configuration] (000) {000};
	\node[configuration] (001) [right =of 000] {001};
	\node[configuration] (100) [above =of 000] {100};
	\node[configuration] (010) [above right =15pt of 000] {010};
	\node[configuration] (101) [right =of 100] {101};
	\node[configuration] (011) [right =of 010] {011};
	\node[configuration] (110) [above =of 010] {110};
	\node[configuration] (111) [right =of 110] {111};

	\path[-{Latex[length=2mm, width=2mm]}]
			(000) edge [ultra thick, bend left = 20] (101)
			(001) edge [dashed] (101)
			(010) edge [ultra thick] (000)
			(011) edge [dashed] (100)
			(100) edge [dashed, bend right = 5] (111)
			(101) edge [ultra thick, bend right = 30] (111)
			(110) edge [ultra thick] (010)
			(111) edge [ultra thick] (110)

            ;
\end{tikzpicture}
\end{minipage}
\begin{minipage}{.495\textwidth}
\centering
\begin{tikzpicture} [node distance=1.5cm]
	\node[configuration] (000) {000};
	\node[configuration] (001) [right =of 000] {001};
	\node[configuration] (100) [above =of 000] {100};
	\node[configuration] (010) [above right =15pt of 000] {010};
	\node[configuration] (101) [right =of 100] {101};
	\node[configuration] (011) [right =of 010] {011};
	\node[configuration] (110) [above =of 010] {110};
	\node[configuration] (111) [right =of 110] {111};

	\path[-{Latex[length=2mm, width=2mm]}]
			(000) edge [dashed] (110)
			(001) edge [ultra thick, bend left = 10] (110)
			(010) edge [dashed] (001)
			(011) edge [dashed, bend right = 45] (110)
			(100) edge [dashed] (110)
			(101) edge [dashed, bend left = 5] (110)
			(110) edge [ultra thick, bend left = 10] (001)
			(111) edge [dashed, bend left = 15] (110)

            ;
\end{tikzpicture}
\end{minipage}
\caption{Two dynamics of the AN $F$ detailed in Example~\ref{example-AN}.
On the left, the dynamics of $F$ under the parallel update schedule.
On the right, the dynamics of $F$ under the update schedule
$\Delta = (\{a\}, \{b\}, \{c\})$.
The limit dynamics are depicted with bold arrows.}
\label{fig-DYN}
\end{figure}

Since the dynamics of a network is a graph that is exponential in size relative to
the number
of its automaton, naively computing the limit dynamics of a family of network
is a computationally expensive process.

\section{The algorithm}
\label{sec-algo}

In this section, we look at an algorithm that can turn any automata network $F$
with a block-sequential update schedule $\Delta$ into another automata
network $F'$, such that the limit dynamics of $F_\Delta$ stays isomorphic to
the limit dynamics of $F'$ under the parallel update schedule $\pi$.
Furthermore, the size of $F'$ will always be the size of $F$, \emph{or less}. \parspacing

This algorithm is built from two parts: first, we parallelize the network
thanks to a known algorithm in the folklore of automata networks. Second, we remove
automata from the networks based on redundancies created in the first step. \parspacing

First, let us state the usual algorithm that, given an automata network $F$ and
a block-sequential update schedule $\Delta$,
provides a new automata network $F'$ defined on the
same set of automata,
such that $F_\Delta$ and $F'_\pi$ have the same exact dynamics. \parspacing

\vspace{-.4cm}
\begin{algorithm}
\caption{Parallelization algorithm of $F_\Delta$}
\begin{algorithmic}
  \Input
  \Desc{$F$}{local functions of a network over $S$, encoded as circuits}
  \Desc{$\Delta$}{block-sequential update schedule over $S$}
  \EndInput
  \Output
  \Desc{$F$}{local functions of a parallel network over $S$, encoded as circuits}
  \EndOutput \parspacing
\For{$(u, v)$ such that $u$ precedes $v$ in $\Delta$}
  \State apply the substitution $x_u \mapsto \theta_u$ in $f_v$ \Comment{$\theta$ is a temporary symbol}
\EndFor
\State let $X \gets S$
\While{$|X| > 0$}
  \State let $s \in X$ such that $f_s$ contains no $\theta$ symbol
  \State $X \gets X \setminus \{s\}$
  \For{$s' \in X$}
    \If{$f_{s'}$ contains $\theta_s$}
      \State apply the substitution $\theta_s \mapsto f_s$ in $f_{s'}$
    \EndIf
  \EndFor
\EndWhile
\State return $F$
\end{algorithmic}
\label{algo-partiel}
\end{algorithm}

Algorithm~\ref{algo-partiel} proceeds with two waves of subsitutions. First,
for every $<$-edge $(u, <, v)$, the influencing automaton $u$ is replaced in
the local function of $v$ by a token symbol $\theta_u$.
All of these token symbols are then replaced by the corresponding local
functions (in this case, $f_u$) in the correct order: that is, no function
is ever used in a substitution if it contains a token character. This way,
even if the network contains a complex tree of $<$-edges, the
substitutions will be applied in the correct order. \parspacing
It holds that this algorithm always returns,
and runs in polynomial time.

\begin{property}
Algorithm~\ref{algo-partiel} always returns, and does so in polynomial time.
\end{property}
\begin{proof}
Let us denote by \emph{$<$-graph} the subset of the graph $G_U(F_\Delta)$
where only the $<$-edges have been preserved. The $<$-graph of $F_\Delta$ is
always a tree (or multiple disconnected trees): if this wasn't the case, there
would be a cycle of $<$-edges in $G_U(F_\Delta)$, which would mean a cycle
of automata that are all updated strictly before their out-neighbor,
which is impossible. \smallskip

Algorithm~\ref{algo-partiel} will place a $\theta$ symbol for every edge in the
$<$-graph. In the second loop, the selected $s$ is always a leaf of one of the
trees contained in the $<$-graph. The applied substitution removes that leaf
from the $<$-graph. By the structure of a tree, all the $<$-edges will be
removed and the algorithm terminates. \smallskip

To see that this algorithm can be performed in polynomial time, consider that
all of the local functions are encoded as circuits. As such, it is enough to
prepare a copy of each local function into one large circuit, on which every
substitution will be performed.
Any substitution $x_u \mapsto \theta_u$ is performed by renaming the
corresponding input gate. Any substitution $\theta_s \mapsto f_s(x)$ is
performed by replacing the input gate which corresponds to $\theta_s$ by
a connection to the output gate of the circuit that computes the local function
$f_s$.
These substitutions are performed for every $<$-edge in the update
digraph of $F_\Delta$, which can be done by doing one substitution
for every pair in the partial order provided by $\Delta$, which is never
more than $n^2$.
The resulting circuit is then duplicated for every automaton in the output
network, which leads to a total size of no more than
$k^2$, for $k$ the size of the input. \qed
\end{proof}

\begin{remark}
This algorithm is not polynomial if the local functions are encoded as
formul\ae, which is a detail often overlooked in the literature where this
parallelization algorithm is always assumed to be polynomial.
\end{remark}

\begin{theorem}
For any $F_\Delta$
Algorithm~\ref{algo-partiel} returns a network $F'$ such that the dynamics
of $F_\Delta$ is equal to that of $F'_\pi$.
\label{th-partiel}
\end{theorem}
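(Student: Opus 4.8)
The plan is to prove the slightly stronger statement that $F_\Delta$ and $F'_\pi$ are the \emph{same} function $\Sigma^n\to\Sigma^n$; since the dynamics of a network under a (periodic) update schedule is entirely determined by the associated global function, equality of these two functions immediately yields equality of the dynamics. To get there I would first give a static, recursive description of the function computed by $F_\Delta$ that mentions no intermediate configuration, and then verify that Algorithm~\ref{algo-partiel} outputs exactly the local functions of that description.

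Write $\Delta=(B_1,\dots,B_m)$. Since $\Delta$ is block-sequential, each automaton lies in exactly one block, and $(u,v)$ is a $<$-edge of $G_U(F_\Delta)$ if and only if $u$ lies in a strictly earlier block than $v$. The $<$-graph being acyclic (as in the proof of the previous Property), one can define, by induction along it, for each automaton $v$ a function $g_v$ obtained from $f_v$ by substituting $g_u$ for the input $x_u$ whenever $u$ is a $<$-predecessor of $v$, and leaving every other input unchanged. The first lemma is: $F_\Delta(x)_v=g_v(x)$ for every automaton $v$ and every $x$. I would prove this by induction on the block index $j$ with $v\in B_j$. Putting $x^{(0)}=x$ and $x^{(\ell)}=F_{B_\ell}(x^{(\ell-1)})$, so that $F_\Delta(x)=x^{(m)}$, one has $F_\Delta(x)_v=f_v(x^{(j-1)})$, because $v$ is updated exactly once, in block $B_j$, and is never touched afterwards; and in $x^{(j-1)}$ the coordinate of an in-neighbour $u$ of $v$ already carries its final value $F_\Delta(x)_u=g_u(x)$ when $u$ lies in an earlier block (a $<$-predecessor), and still carries $x_u$ when $u$ lies in block $B_j$ or later (a $\geqslant$-predecessor, including the other automata of $B_j$ itself). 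Feeding these values into $f_v(x^{(j-1)})$ reproduces verbatim the defining substitution of $g_v$; for $j=1$ there are no $<$-predecessors, so $g_v=f_v$ and $x^{(0)}=x$.

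The second lemma is that Algorithm~\ref{algo-partiel} returns the network whose local functions are the $g_v$. After the first loop, $f_v$ has been rewritten as $f_v$ with each $<$-predecessor input $x_u$ replaced by $\theta_u$. I would then argue by induction along the order in which automata leave $X$ in the second loop: when $s$ is selected its expression contains no $\theta$ symbol, which forces every $<$-predecessor $u$ of $s$ to have already been removed; by induction each such removal substituted $\theta_u\mapsto g_u$ into $f_s$, so at the moment $s$ is selected $f_s$ is already equal to $g_s$, and no later step modifies it (substitutions only touch functions still in $X$). Hence every returned $f_v$ equals $g_v$, and therefore $F'_\pi(x)_v=g_v(x)=F_\Delta(x)_v$ for all $v$ and $x$, i.e.\ $F'_\pi=F_\Delta$, and the two dynamics coincide.

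I expect the main obstacle to be the bookkeeping in the first lemma: rigorously justifying, for $v\in B_j$, that in the configuration $x^{(j-1)}$ that $f_v$ actually reads, the already-updated coordinates are \emph{exactly} the $<$-predecessors and the not-yet-updated ones \emph{exactly} the $\geqslant$-predecessors — in particular that automata sharing $v$'s block still carry their input value — and that $f_v(x^{(j-1)})$ is genuinely $v$'s final value. A secondary point requiring care is that the $<$-graph need not be a forest (a node may have several $<$-predecessors), so in the second lemma one must use that $\theta_s$ is introduced only for $<$-edges and is never re-created by subsequent substitutions, which is precisely what makes the selection rule ``$f_s$ contains no $\theta$'' fire exactly when all $<$-predecessors of $s$ have been resolved.
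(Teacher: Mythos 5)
Your proposal is correct and follows essentially the same route as the paper: both prove the stronger fact that $F_\Delta$ and $F'$ coincide as functions, by induction on the block index, using that the substitutions make each parallelized local function read the already-updated (final) values of automata from earlier blocks. Your explicit intermediate functions $g_v$ and the second lemma about the order in which automata leave $X$ merely make rigorous a step the paper's proof leaves implicit (that the substituted expression really is the final $f'_u$), so it is a more detailed rendering of the same argument rather than a different one.
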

\begin{proof}
Let us consider some configuration $x\in\Sigma^n$, and let us compute its
image $x'$ in both systems.
Let us consider the initial block $X_0$ in $\Delta$. For any automaton in
$X_0$, its local function is untouched in $F'$, and thus
$F_\Delta(x)|_{X_0} = F'(x)|_{X_0}$. Suppose
that $F_\Delta(x)|_{X_0 \cup \ldots \cup X_k} = F'(x)|_{X_0 \cup \ldots \cup X_k}$
for some $k$,
let us prove that is true when including the next block $X_{k+1}$. \smallskip

Let $v \in X_{k+1}$. By the nature of updates in $\Delta$, $f_v$ will be
updated using the values in $F_\Delta(x)$
for any $x_u$ such that $u \in X_0 \cup \ldots \cup X_k$,
and in $x$ otherwise.
In $F'$, in the local function $f'_v$ and for any
$u \in X_0 \cup \ldots \cup X_k$ that influences $v$, a substitution has replaced
$x_u$ by $f'_u(x)$, which implies that the value of $v$ will be updated using
a value of $u$ in $F'(x)$. Pulling this together, we obtain that
$f_v(x) = f'_v(x)$
and $F_\Delta(x)|_{X_0 \cup \ldots \cup X_{k+1}} = F'(x)|_{X_0 \cup \ldots \cup X_{k+1}}$,
and the recurrence yields $F_\Delta(x) = F'(x)$ for any $x$.
\qed

\end{proof}

\vspace{-.4cm}
\begin{algorithm}
\caption{Parallelization algorithm of $F_\Delta$, with a possible reduction in size}
\begin{algorithmic}
  \Input
  \Desc{$F$}{local functions of a network over $S$, encoded as circuits}
  \Desc{$\Delta$}{update schedule over $S$}
  \EndInput
  \Output
  \Desc{$F'$}{local functions of a parallel network over a subset of $S$,}
  \Desc{}{encoded as circuits}
  \EndOutput \parspacing
\State let $F' \gets$ apply Algorithm~\ref{algo-partiel} to $F_\Delta$
\State let $G_I(F') \gets$ the interaction digraph of $F'$
\For{$(u, v) \in S^2$}
  \If{$\forall x \in \Sigma^n, f_u(x) = g(f_v(x))$} \Comment{for some $g : \Sigma \to \Sigma$}
    \For{$(u, w) \in E(G_I(F'))$}
      \State apply the substitution $x_u \mapsto g(x_v)$ in $f_w$
    \EndFor
  \EndIf
\EndFor
\While{$\exists u \in S$ such that $u$ has no accessible neighbor in $G'_I$}\
  \State $S \gets S \setminus \{u\}$ \Comment{$u$ is removed from the network}
\EndWhile
\State return $F'$
\end{algorithmic}
\label{algo-para}
\end{algorithm}

Algorithm~\ref{algo-para} is our contribution to this process,
and removes automata that are not necessary for the
limit dynamics of the network.
It proceeds in two steps:
first, the algorithm identifies pairs of automata with equivalent
local functions, up to some function. In other terms, if one automaton $u$ can be
computed as a function $g$ of the local function of another automaton $v$, then $u$
is not necessary and all references to $x_u$ in the network can be replaced
by $g(x_v)$ for an identical result. Of course, this only works under the
hypothesis that $u$ and $v$ are updated synchronously, which is the case
after the application of Algorithm~\ref{algo-partiel}.
Second, the algorithm iteratively
removes any automaton that has no influence in the network,
that is, that has no accessible
neighbor in the interaction graph of the network. These automata are not
part of cycles and do not lead to cycles, and as such
have no impact on the attractors. \parspacing

Algorithm~\ref{algo-para} is non-deterministic, and when the local functions
of any pair of automata
$(u, v)$ are shown to be equivalent up to some reversible function
$g : \Sigma \to \Sigma$, either automata could replace the influence of the other
without preference. As such, more than one result network is possible, but
all are equivalent in their limit dynamics, as will be shown later. \parspacing

While it is clear that Algorithm~\ref{algo-para} always terminates,
its complexity is out of the deterministic polynomial range, as applying
it implies
solving the \coNP-complete decision problem of testing if two
Boolean formul\ae\ are equal for all possible pairs of automata and for every
possible function $g : \Sigma \to \Sigma$.
As such, a polynomial implementation of this
algorithm would (at least) imply P $=$ \NP. This drastic conclusion is softened
when looking at restricted classes of networks where redundancies can be easily
pointed out, which is the case for the
rest of the paper.

\begin{theorem}
\label{theorem-eq-limit}
For any $F_\Delta$,
Algorithm~\ref{algo-para} returns a network $F'$ such that the limit dynamics
of $F_\Delta$ and $F'_\pi$ are isomorphic.
\end{theorem}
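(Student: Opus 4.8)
The plan is to process the three stages of Algorithm~\ref{algo-para} in turn and show that each one leaves the limit dynamics unchanged up to isomorphism. The first stage is merely an invocation of Algorithm~\ref{algo-partiel}, so by Theorem~\ref{th-partiel} the network it produces is parallel and has the \emph{same} dynamics as $F_\Delta$; from that point on I may assume $F'$ is a parallel network and need only handle the substitution loop and the deletion loop. Since ``having the same dynamics'' and ``having isomorphic limit dynamics'' compose, it suffices to show the substitution stage preserves the limit dynamics exactly and the deletion stage preserves it up to isomorphism.

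For the substitution loop I would isolate the following observation as a lemma: if $H$ is a parallel network and $f_u(x)=g(f_v(x))$ for all $x$, then $\img(H)$ — and hence the set of configurations on limit cycles of $H_\pi$ — is contained in $A:=\{x : x_u=g(x_v)\}$, since $x=H(y)$ forces $x_u=f_u(y)=g(f_v(y))=g(x_v)$. Let $\sigma$ be the idempotent map that overwrites coordinate $u$ by $g(x_v)$ and fixes every other coordinate; the rewriting $x_u\mapsto g(x_v)$ turns $H$ into a network $H'$ that (i) agrees with $H$ on $A$, because $\sigma$ fixes $A$ pointwise, and (ii) again satisfies $f^{H'}_u=g\circ f^{H'}_v$, hence also has $\img(H')\subseteq A$ — (ii) follows by applying the same rewriting to both sides of $f_u=g\circ f_v$ and noting that afterwards $f_u$ no longer reads $x_u$. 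Writing $L,L'$ for the recurrent sets of $H_\pi$ and $H'_\pi$, (i) and (ii) give $L=L'$ with $H|_L=H'|_L$: indeed $L\subseteq L'$ because $H|_L=H'|_L$ is a permutation of $L$, and $L'\subseteq L$ because every $H'$-periodic configuration lies in $\img(H')\subseteq A$, where $H'=H$. Applying this to each pair treated by the loop shows the substitution stage preserves the limit dynamics exactly; pleasantly, the argument depends neither on the order in which pairs are processed nor on which automaton of an equivalent pair is eliminated, so the algorithm's nondeterminism is harmless here.

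For the deletion loop, each iteration removes one automaton $u$ that is read by no local function (equivalently, has out-degree $0$ in the current interaction digraph; in particular $u$ has no self-loop). Let $G$ be the network on $S\setminus\{u\}$ obtained by deleting $u$ and let $\phi:\Sigma^S\to\Sigma^{S\setminus\{u\}}$ be the coordinate projection. Since nobody reads $x_u$, $\phi\circ F'=G\circ\phi$ holds everywhere, so $\phi$ carries limit cycles to limit cycles, and I would then show $\phi$ restricts to a bijection between the recurrent set of $F'_\pi$ and that of $G_\pi$. Surjectivity: a limit cycle $(y^1,\dots,y^k)$ of $G_\pi$ lifts by setting $x^i_u:=f_u(y^{i-1})$ (well defined since $f_u$ ignores $x_u$), which one checks is a limit cycle of $F'_\pi$ projecting onto it. Injectivity: two recurrent configurations with the same projection $y$ lie on limit cycles that project to the same $G_\pi$-limit cycle, hence share the projected predecessor $y^-$ of $y$, so their $u$-coordinates both equal $f_u(y^-)$ and the configurations coincide. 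Thus $\phi$ is an isomorphism of limit dynamics, and composing these isomorphisms over all iterations completes the proof.

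I expect the substitution stage to be where the real work lies: establishing item (ii) — that the equivalence $f_u=g\circ f_v$ is reinstated for the rewritten functions — is exactly what makes the lemma iterable and order-independent, and it requires a small but careful verification. Everything else is bookkeeping: the first stage is handed to us by Theorem~\ref{th-partiel}, and the deletion stage is just the elementary fact that an automaton nobody reads contributes a coordinate that is forced on every limit cycle, hence adds nothing to the limit dynamics.
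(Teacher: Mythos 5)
Your proposal is correct and follows essentially the same route as the paper: invoke Theorem~\ref{th-partiel} for the parallelization stage, then argue that the substitution $x_u \mapsto g(x_v)$ cannot change behavior on configurations where $x_u = g(x_v)$ holds (which contains the image, hence all recurrent configurations), and finally that deleting an automaton with no out-neighbors projects the limit dynamics bijectively. Your write-up is in fact more careful than the paper's (the explicit restriction to the set $A$, the re-establishment of $f_u = g\circ f_v$ after rewriting, and the surjectivity/injectivity of the projection are only implicit or sketched there), so no gap to report.
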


\begin{proof}
By Theorem~\ref{th-partiel}, the network $F'$ returned by the application of
Algorithm~\ref{algo-partiel} to $F_\Delta$ has identical dynamics
to $F_\Delta$. \smallskip

Algorithm~\ref{algo-para} operates two kinds of modifications. \smallskip

The first operation is replacing the influence of any automaton $u$
by another automaton $v$ if
they are found to have equivalent local function up to some $g:\Sigma\to\Sigma$,
that is, $f_u = g \circ f_v$.
For any configuration $x$, the value
of $f_u(x)$ and $g(f_v(x))$ are always equal. Thus, substituting the variable
$x_u$ by $g(x_v)$ in the local functions of every out-neighbor of $u$
will lead to an identical limit behavior. After this substitution,
the automaton $u$ does not have any influence over the network. Morever, all its
previous out-neighbors in $G_I(F')$ are now the out-neighbors of $v$.  \smallskip

The second operation is iteratively removing automata that do not influence
any automaton. Let
$u$ be such an deleted automaton. Consider a limit cycle $(x^1, x^2, \ldots,$ $x^k)$
in $G$. By definition of a limit cycle, $G(x^i) = x^{i + 1}$ for any $i$,
$G(x^k) = x^1$, and $x^i = x^j \implies i = j$.
Consider the component $x_u^i$ for some $i$. Since $u$ does not influence any
automaton, $x^{i + 1}$ is a function of $x^i|_{S\setminus\{u\}}$.
As the entire sequence is aperiodic, the sequence of the subconfigurations
$x^i|_{S\setminus\{u\}}$ is also aperiodic, and the attractor is preserved
in $F'$. \qed
\end{proof}

\section{Reductions in size of tangential cycles}
\label{sec-cycles}

In this section, we characterize the reduction in size that our algorithm provides
on a specific family of networks. We call
\emph{tangential cycles} (TC) any AN composed of any number of cycles
$\{C_1, C_2, \ldots, C_k\}$
such that a unique path of automata, called the \emph{tangent},
is shared by all of the cycles.
The first automaton of the tangent is the only
automaton with more than one in-neighbor, and is called the \emph{central automaton}.
A TC is represented as part of
Figure~\ref{fig-appalgo}, which contains three cycles and a tangent of length
$0$ (only one node is shared between the cycles).

\subsection{Reducing block-sequential TCs}

The reduction in size provided by Algorithm~\ref{algo-para}
can be quite large on TCs, as even TCs updated in parallel can be reduced in size
by merging the different cycles as much as possible.
As such, the reduction power of this algorithm is greater than just removing
the redundancies inherent to the block-sequential to parallel update translation.
Indeed, Figure~\ref{fig-appalgo} provides an example of a parallel TC,
the size of which
is greatly reduced by the application of Algorithm~\ref{algo-para}.
But, by this process, the final result of
Algorithm~\ref{algo-para} is no longer a TC. \parspacing

As explained above, TCs
are studied as the next simplest cases of complex ANs that make biological
sense, after automata cycles.
Both isolated cycles and double cycles are examples of TCs.
To show that the study of TCs under block-sequential update schedules can be
directly reduced to the study of TCs under the parallel update schedule,
we provide an algorithm that transforms any TC under a block-sequential
update schedule into a TC under the parallel update schedule,
such that their limit dynamics are isomorphic, and the local functions of their
central automaton equivalent. This is done by simply stopping the process
of Algorithm~\ref{algo-para}
earlier to preserve the TC shape of the network.

\begin{algorithm}
\caption{Parallelization algorithm of a TC $F$ under the block-sequential update
schedule $\Delta$, with a possible reduction in size}
\begin{algorithmic}
  \Input
  \Desc{$F$}{local functions of a network over $S$, encoded as circuits}
  \Desc{$\Delta$}{update schedule over $S$}
  \EndInput
  \Output
  \Desc{$F'$}{local functions of a parallel network over a subset of $S$,}
  \Desc{}{encoded as circuits}
  \EndOutput \parspacing
\State let $F' \gets$ apply Algorithm~\ref{algo-partiel} to $F_\Delta$
\State let $G_I(F') \gets$ the interaction digraph of $F'$
\For{$(u, v) \in S^2$, such that either $u$ or $v$ has more than one in-neighbor in $G_I(F')$} 
  \If{$\forall x \in \Sigma^n, f_u(x) = g(f_v(x))$} \Comment{for some $g : \Sigma \to \Sigma$}
    \For{$(u, w) \in E(G_I(F'))$}
      \State apply the substitution $x_u \mapsto g(x_v)$ in $f_w$
    \EndFor
  \EndIf
\EndFor
\For{$u \in S$}
  \If{$u$ has no accessible neighbors in $G_I(F')$}
    \State $S \gets S \setminus \{u\}$ \Comment{$u$ is removed from the network}
  \EndIf
\EndFor
\State return $F'$
\end{algorithmic}
\label{algo-TC}
\end{algorithm}

\begin{figure}
\tikzset{smallstate/.style = {state, inner sep=0pt, minimum size=.8cm}}

\centering
\begin{minipage}{.3\textwidth}
\centering
\vspace{.4cm}
\begin{tikzpicture}  [node distance = .5cm, -{Latex[length=1.5mm, width=1.5mm]}]
	\node[smallstate] (A) {a};
	\node[smallstate] (B) [above right =of A] {b};
	\node[smallstate] (C) [below right =of B] {c};
	\node[smallstate] (D) [below right =of A] {d};
	\node[smallstate] (E) [above left =of A] {e};
	\node[smallstate] (F) [below left =of A] {h};

	\path
			(A) edge [bend left] node [below right] {\tiny $<$} (B)
			(B) edge [bend left] node [below left] {\tiny $\geqslant$} (C)
			(C) edge [bend left] node [above left] {\tiny $\geqslant$} (D)
			(D) edge [bend left] node [above right] {\tiny $\geqslant$} (A)
			(A) edge [bend right] node [below left] {\tiny $\geqslant$} (E)
			(E) edge node [right] {\tiny $\geqslant$} (F)
			(F) edge [bend right] node [above left] {\tiny $<$}  (A);
			
	 \draw (A) to [out=20, in=-20, looseness=7] node [right] {\tiny $\geqslant$} (A);
\end{tikzpicture}
\end{minipage}
\begin{minipage}{.3\textwidth}
\centering
 \begin{align*}
   f_a(x) &= x_a \vee x_d \vee \neg x_h \\
   f_b(x) &= \neg x_a \\
   f_c(x) &= x_b \\
   f_d(x) &= x_c \\
   f_e(x) &= x_a \\
   f_h(x) &= x_e
 \end{align*}
\end{minipage}
\begin{minipage}{.3\textwidth}
\centering
 Initial network, $\Delta = (\{h\}, \{a,c,d,e\}, \{b\})$
\end{minipage}

\vspace{.2cm}
\hrulefill
\vspace{-.3cm}

\begin{minipage}{.3\textwidth}
\hfill
\begin{center}
N / A
\end{center}
\hfill
\end{minipage}
\begin{minipage}{.3\textwidth}
\centering
 \begin{align*}
   f_a(x) &= x_a \vee x_d \vee \neg \theta_h \\
   f_b(x) &= \neg \theta_a
 \end{align*}$\ldots$
\end{minipage}
\begin{minipage}{.3\textwidth}
\centering
 Algorithm~\ref{algo-partiel}, first loop (c, d, e and h are unchanged)
\end{minipage}

\vspace{.2cm}
\hrulefill
\vspace{-.3cm}

\begin{minipage}{.3\textwidth}
\centering
\vspace{.5cm}
\begin{tikzpicture}  [node distance = .5cm, -{Latex[length=1.5mm, width=1.5mm]}]
	\node[smallstate] (A) {a};
	\node[smallstate] (B) [above right =of A] {b};
	\node[smallstate] (C) [below right =of B] {c};
	\node[smallstate] (D) [below right =of A] {d};
	\node[smallstate] (E) [above left =of A] {e};
	\node[smallstate] (F) [below left =of A] {h};

	\path
			(B) edge [bend left] (C)
			(C) edge [bend left] (D)
			(D) edge (A)
			(D) edge (B)
			(A) edge [-, dashed, bend left=20] (B)
			(A) edge [bend right=20] (B)
			(A) edge [bend right=20] (E)
			(E) edge [bend right=20] (A)
			(E) edge (F)
			(E) edge (B);
			
	 \draw (A) to [out=290, in=250, looseness=7] (A);
\end{tikzpicture}
\end{minipage}
\begin{minipage}{.3\textwidth}
\centering
 \begin{align*}
   f'_a(x) &= x_a \vee x_d \vee \neg x_e \\
   f'_b(x) &= \neg ( x_a \vee x_d \vee \neg x_e ) \\
   f'_c(x) &= x_b \\
   f'_d(x) &= x_c \\
   f'_e(x) &= x_a \\
   f'_h(x) &= x_e
 \end{align*}
\end{minipage}
\begin{minipage}{.3\textwidth}
\centering
 Algorithm~\ref{algo-partiel}, second loop
\end{minipage}

\vspace{.2cm}
\hrulefill
\vspace{.3cm}

\begin{minipage}{.3\textwidth}
\centering
\begin{tikzpicture}  [node distance = .5cm, -{Latex[length=1.5mm, width=1.5mm]}]
	\node[smallstate] (A) {a};
	\node[smallstate] (D) [below right =of A] {d};
	\node[smallstate] (C) [above right =of D] {c};
	\node[smallstate] (E) [above left =of A] {e};

	\path
			(C) edge [bend left] (D)
			(D) edge [bend left] (A)
			(A) edge [bend right] (E)
			(E) edge [bend right] (A)
			(A) edge (C)
			(E) edge [bend left, dashed, -] (C);
			
	 \draw (A) to [out=245, in=205, looseness=7] (A);
\end{tikzpicture}
\end{minipage}
\begin{minipage}{.3\textwidth}
\centering
\vspace{-.5cm}
 \begin{align*}
   f'_a(x) &= x_a \vee x_d \vee \neg x_e \\
   f'_c(x) &= \neg x_a \\
   f'_d(x) &= x_c \\
   f'_e(x) &= x_a
 \end{align*}
\end{minipage}
\begin{minipage}{.3\textwidth}
\centering
 Algorithm~\ref{algo-para} after the deletion of $h$ and the merge of $b$ into
 $a$. This is where Algorithm~\ref{algo-TC} ends
\end{minipage}

\vspace{.2cm}
\hrulefill
\vspace{.2cm}

\begin{minipage}{.3\textwidth}
\centering
\begin{tikzpicture}  [node distance = .5cm, -{Latex[length=1.5mm, width=1.5mm]}]
	\node[smallstate] (A) {a};
	\node[smallstate] (D) [below right =of A] {d};
	\node[smallstate] (C) [above right =of D] {c};

	\path
			(A) edge [bend left] (C)
			(C) edge [bend left] (D)
			(D) edge [bend left] (A)
			(C) edge [bend left] (A);
			
	 \draw (A) to [out=200, in=160, looseness=7] (A);

\end{tikzpicture}
\end{minipage}
\begin{minipage}{.3\textwidth}
\centering
\vspace{-.3cm}
 \begin{align*}
   f'_a(x) &= x_a \vee x_d \vee \neg x_c \\
   f'_c(x) &= x_a \\
   f'_d(x) &= \neg x_c
 \end{align*}
\end{minipage}
\begin{minipage}{.3\textwidth}
\centering
 End of Algorithm~\ref{algo-para}, after the merge of $e$ into
 $c$
\end{minipage}

\caption{Application of Algorithm~\ref{algo-partiel},~\ref{algo-para}
and~\ref{algo-TC}
on an example network.
Different steps of the algorithm are represented and separated using
horizontal lines. At each step, the interaction graph or update graph and
the local functions are the result of the operations indicated on the right.
As the initial network is a TC, the fourth step represents the result
returned by Algorithm~\ref{algo-TC}, which is a TC of smaller size.
The fifth step represents the result returned by Algorithm~\ref{algo-para},
which is not a TC.
Dashed lines in the interaction digraph connect automata the local function
of which are equivalent up to a negation.
Only the first graph is represented as an update digraph, as all the other
networks are updated in parallel.}
\label{fig-appalgo}
\end{figure}
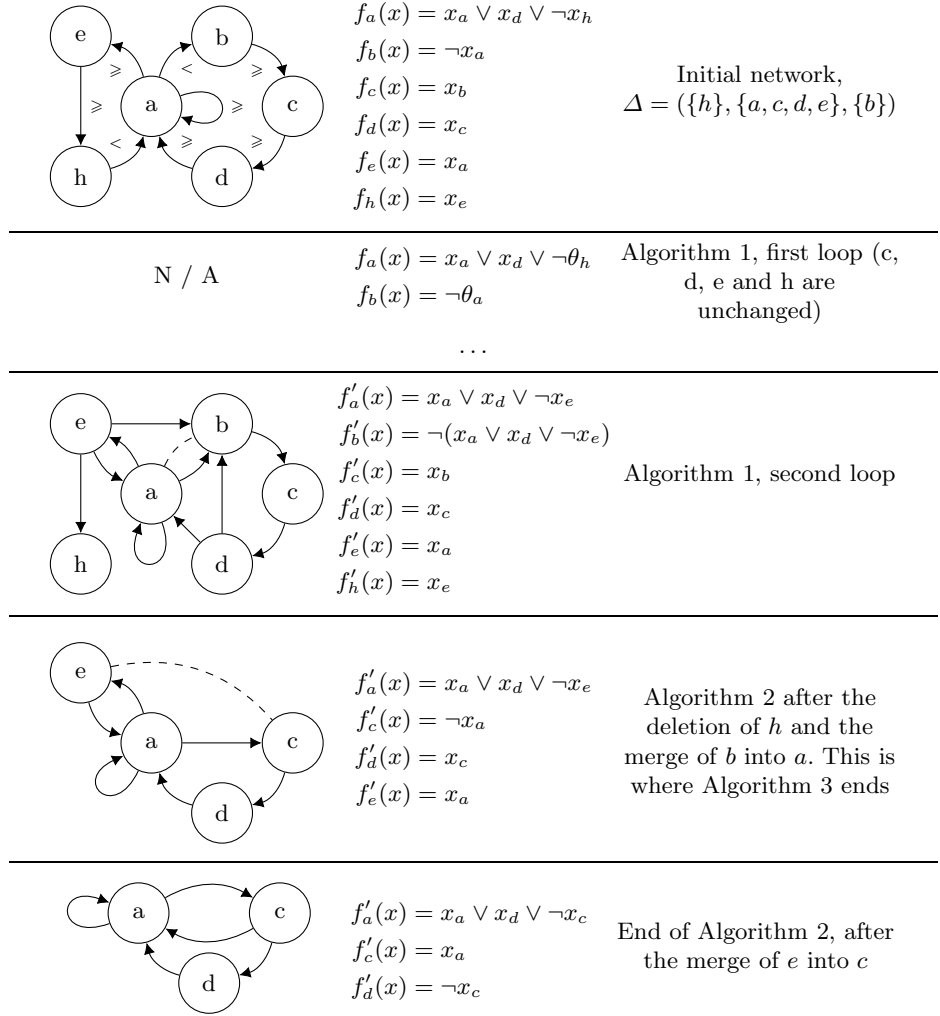
 
The only difference between Algorithms~\ref{algo-para} and~\ref{algo-TC}
is that the latter restricts the reductions it operates. If two
local functions are found to be equivalent up to some function $g$,
Algorithm~\ref{algo-TC}
removes a node if and only if these local functions are duplicates
of the previous local function of the central automaton of the network.
Removing duplications of any function that is part of a cycle would merge
two cycles and the network would no longer be tangential cycles, in a
way that is harder to count the reductions for. Since Algorithm~\ref{algo-TC}
is a variation of Algorithm~\ref{algo-para} that only does less reductions,
Theorem~\ref{theorem-eq-limit} still applies in its case. An application of
Algorithms~\ref{algo-para} and~\ref{algo-TC} is illustrated in
Figure~\ref{fig-appalgo} and the difference between the algorithms is
highlighted.

\begin{theorem}
\label{theorem-size-reduction}
Let $F$ be a TC and
$\Delta$ a block-sequential update schedule.
The amount of reductions in size that
Algorithm~\ref{algo-TC} operates on $F_\Delta$
is the number of $<$-edges in the update digraph of $F_\Delta$,
and the result is a TC.
\end{theorem}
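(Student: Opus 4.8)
\emph{Overview.} The plan is to follow a TC $F$ through Algorithm~\ref{algo-partiel} and then through Algorithm~\ref{algo-TC}, keeping track of the interaction digraph at each stage, and then to exhibit a bijection between the $<$-edges of $G_U(F_\Delta)$ and the automata that Algorithm~\ref{algo-TC} deletes. The assertion that the output is again a TC will fall out of the same structural analysis.

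\emph{Step 1: parallelizing a TC.} First I would describe the effect of Algorithm~\ref{algo-partiel}. As recalled in the proof of Theorem~\ref{th-partiel}, the $<$-edges form a forest; since in a TC every automaton except the central automaton $q$ has in-degree $1$, each maximal directed run $u_0\to u_1\to\cdots\to u_\ell$ of $<$-edges either lies entirely inside one cycle and is a simple path, or ends at $q$. The second loop of Algorithm~\ref{algo-partiel} processes such a run from its source: it successively removes the variables $x_{u_0},\dots,x_{u_{\ell-1}}$ and replaces them by the already parallelized functions of $u_0,\dots,u_{\ell-1}$, so that $f'_{u_j}$ becomes $f_{u_j}$ composed, on its single argument, with $f'_{u_{j-1}}$. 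Consequently: (i) each $u_j$ with $0\le j<\ell$ loses its outgoing $<$-edge, and if that was its only out-edge (which holds unless $u_j$ is the last automaton $t_r$ of the tangent, where the cycles branch, or is $q$ when the tangent has length $0$) then $u_j$ ends up with no out-edge in $G_I(F')$; and (ii) if the run reaches $q$, then every $u_j$ downstream of $q$ on the run has, in $G_I(F')$, exactly the in-neighbours of $q$ and satisfies $f'_{u_j}=g\circ f'_q$ for a unary map $g$ obtained by composing the relevant local functions.

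\emph{Step 2: the deletions of Algorithm~\ref{algo-TC}, and the bijection.} Algorithm~\ref{algo-TC} deletes an automaton for exactly one of two reasons: either (a) the restricted merging loop found its local function to be a copy, up to a unary map, of the (new) local function of $q$ --- by Step~1(ii) this is exactly the set of automata $\neq q$ reachable from $q$ through a directed path of $<$-edges --- and rerouted its out-edges onto $q$; or (b) the parallelization already left it with no out-edge, which by Step~1(i) happens exactly for the automata whose out-edges are all $<$-edges. I would then define the matching ``$<$-edge $\mapsto$ deleted automaton'': an edge $(u,v)$ with $u$ reachable from $q$ through $<$-edges (including $u=q$) is sent to $v$, which is then of type~(a); each remaining $<$-edge is an out-edge of an automaton lying on a collapsed run not emanating from $q$, and I would send it to the automaton of that run which loses its unique out-edge in this collapse, i.e.\ to $u$ when $u$ has out-degree $1$. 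The delicate point is the automaton $t_r$ (and $q$ for a length-$0$ tangent), whose out-degree exceeds $1$: here I would argue that having several of its out-edges be $<$-edges forces several \emph{distinct} deletions further along the corresponding downstream collapsed runs, and reallocate the matching along those runs so that it remains a bijection; this is where I expect the real work, and the main obstacle, to lie.

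\emph{Step 3: the output is a TC.} Finally I would verify that after all deletions $G_I(F')$ is again a TC. A deleted automaton of type~(a) had its out-edges rerouted onto $q$, and a deleted automaton of type~(b) had no out-edge, so deleting them only short-circuits directed paths: no cycle through a surviving automaton is created or destroyed, only shortened. Since the merging loop only ever reroutes onto $q$, the automaton $q$ (or, if $q$ itself is deleted, the single surviving automaton onto which every influence was ultimately rerouted) remains the unique automaton of in-degree $>1$, and the surviving automata still form a union of cycles all sharing an initial common path --- that is, a TC. Combining Steps~2 and~3 gives both halves of the statement.
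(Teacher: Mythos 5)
Your overall route is the same as the paper's: charge each $<$-edge with one deleted automaton, distinguishing deletions obtained by merging a duplicate of (a composition of) the central automaton's function from deletions of automata left with no accessible neighbour after the substitutions of Algorithm~\ref{algo-partiel}, and then argue separately that the surviving digraph is still a TC. The paper does exactly this, but as a direct case analysis on each $<$-edge $(u,<,v)$: if $u$ is the central automaton then $f'_v=g\circ f'_u$ and one of the two is merged away; if $u$ is not central then $f_u(x)=g(x_w)$, the substitution removes $u$'s influence on $v$, and $u$, having had a single accessible neighbour, becomes influence-less and is deleted; it then checks that no further deletion can occur. Your Step 3 is essentially the paper's closing argument.

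The genuine problem is that your Step 2 stops exactly where the proof still has to be made: you single out the branch automaton $t_r$ (last automaton of the tangent, whose out-degree equals the number of cycles) as ``where the real work lies'', but you do not do that work, and the reallocation you sketch does not go through as imagined. Suppose $t_r\neq q$ and one out-edge $(t_r,<,v)$ is a $<$-edge while another out-edge of $t_r$ is a $\geqslant$-edge. After Algorithm~\ref{algo-partiel}, $t_r$ keeps its other out-neighbours, so it is not deleted by the ``no accessible neighbour'' rule; and $f'_v=g\circ f'_{t_r}$ is a duplicate of a \emph{non-central} local function, a pair in which neither automaton has more than one in-neighbour in $G_I(F')$, so the restricted merging loop of Algorithm~\ref{algo-TC} never touches it. Hence no deletion can be charged to that $<$-edge, and there is nothing ``further downstream'' to reallocate the matching to: the downstream automata of that run keep their out-edges and their functions are duplicates of non-central functions only. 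For comparison, the paper's case 2 simply asserts that a non-central $u$ ``only had one accessible neighbour'', which is precisely what fails at $t_r$ when the tangent contains at least two automata; so your instinct that this configuration is the crux is right, but your proposal leaves it open, and closing it requires either an argument specific to this configuration or an adjustment of the claim/matching rather than the bookkeeping fix you suggest.
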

\begin{proof}

Algorithm~\ref{algo-partiel} operates a substitution for every $<$-edge
in the update digraph of $F_\Delta$. In this proof, we will show that
each of the possible transformations implies the removal of exactly one node
from the network. \smallskip

For any such edge $(u, <, v)$, there are two cases. Either $u$ is the central
automaton, or not. In any case, $u \neq v$ since the contrary would imply
that an automaton is updated strictly before itself. \smallskip

If we suppose that $u$ is the central automaton, this means that $f_v$ is a local
function that only depends on $x_u$. It can thus be written $f_v(x) = g(x_u)$
for some $g : \Sigma \to \Sigma$. After the application of
Algorithm~\ref{algo-partiel}, we thus obtain $f_v(x) = g(f_u(x))$, which
implies the removal of either $u$ or $v$ (but at this point, not both)
by Algorithm~\ref{algo-TC}. \smallskip

If we suppose that $u$ is not the central automaton, this means that $f_v$ is an
arbitrary formula which contains $x_u$, and $f_u$ is a function of the form
$f_u(x) = g( x_w )$ for some $g$ and some $w \in S$.
Note that $w \neq u$ by the hypothesis that $F$ is a TC, either $w$ is the previous
automaton in the path, or it is the central automaton $v$.
As such, applying Algorithm~\ref{algo-partiel} substitutes any
mention of $x_u$ in $f_v$ by $g( x_w )$.
Previously, $u$ only had one accessible neighbor, as it was part of a path
connecting to the central automaton.
This leaves $x_u$ without any accessible neighbors in the interaction digraph
of $F$, which means that it is removed by Algorithm~\ref{algo-TC}.
If the removed edge is part of a cycle, this means that this cycle will be
reduced in size. If the edge is part of the tangent, this means that the
tangent will be reduced in size. \smallskip

We thus obtain that the number of reductions is at least the number of
$<$-edges in the update digraph of the network. Suppose now that some extra
automaton $u$ is removed on top of any $<$-edge related reduction.
First observe that if $u$ has no accessible neighbor, it must have had none
from before the application of Algorithm~\ref{algo-partiel}, since in none of 
the two cases are external automaton disconnected from each other.
Now suppose that $f_u$ is equivalent to some $f_v$ up to some $g$.
Neither $u$ nor $v$ can be the central automaton, as any duplication of that
function is handled in the first case. This proves that the number of reductions
is exactly the number of $<$-edges in the update digraph of $(F, \Delta)$. \smallskip

Let us now show that the result of Algorithm~\ref{algo-TC} is a TC.
If the initial network had a central automata, there still exists a
unique central automata at the end of the algorithm, even if the original
central automata was removed in a chosen reduction.
Paths that exit the central automata
in the previous network still exit the central automata in the result,
in the same number, and still share some tangent.
The paths can be smaller in size, as well as the tangent,
but they still end in the central automata.\qed

\end{proof}

If Algorithm~\ref{algo-para} cannot be polynomial
in the worst case under the hypothesis
that $\text{P} \neq \text{NP}$, Algorithm~\ref{algo-TC} can be simplified to the
following
rule: taking a TC with a block-sequential
update schedule, we obtain the equivalent parallel TC by reducing
each cycle by the number of $<$-edges that its update digraph contains.
This process is quadratic, since we only need to check the possible $<$-edges
defined by the partial order defined by $\Delta$, which are no more than $n^2$.

\subsection{Reducing parallel Boolean TCs further}

Applying Algorithm~\ref{algo-para} to its full extent to a Boolean 
TC (That is, a TC defined over the Boolean alphabet)
may result in
a larger reduction in size. As any automaton that is not the central one
has a unary function as its local function, any pair of non-central local functions
is equivalent up to some $g:\Sigma\to\Sigma$ if they are influenced
by the same automaton. For example, if the central automaton influences
three other automata that represent the start of three chains, these three automata
can be merged into one. Continuing this zipping process yields a final network
only as large as the longest cycle of the initial TC. 

This process is not straightforward for non-Boolean TCs, as the local functions
along the chains can be non-reversible using modular arithmetics, for example.
Optimizing these networks is still possible, but requires a more complex set
of substitutions to do so. It has been proven in general using modules
and output functions~\cite{perrotin2021simulation}. The following theorem
corresponds to
the Boolean case, proven with more classical means.
An example of its application is
illustrated in the two last steps of Figure~\ref{fig-appalgo}.

\begin{theorem}
Let $F$ be a Boolean TC. Applying Algorithm~\ref{algo-para} to $F_\pi$
generates a network $F'$ whose size is that of the largest cycle in $F$.
\label{theorem-bool-reduction}
\end{theorem}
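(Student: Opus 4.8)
The plan is to proceed by induction on the number of cycles in the TC, treating Algorithm~\ref{algo-para} as a "zipping" process that merges the chains feeding out of the central automaton one at a time. First I would set up notation: let $F$ be a Boolean TC with cycles $C_1,\dots,C_k$, with the tangent of length $\ell$ and central automaton $c$; write $n_i$ for the length of the $i$th cycle, and let $C_{\max}$ be a cycle of maximal length. The key structural fact I would isolate is that after the tangent, each cycle $C_i$ consists of a simple chain $c \to a^i_1 \to a^i_2 \to \cdots \to a^i_{m_i} \to (\text{end of tangent or } c)$, and every non-central automaton $a$ on such a chain has a \emph{unary} local function $f_a(x) = g_a(x_{p(a)})$ where $p(a)$ is its unique in-neighbor. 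Over the Boolean alphabet, $g_a$ is one of the four unary functions, but the two constant functions would kill the cycle's retroaction, so in a genuine TC $g_a \in \{\mathrm{id}, \neg\}$, i.e. every $g_a$ is reversible.

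The core step is the \emph{zipping lemma}: if two non-central automata $u$ and $v$ share the same in-neighbor $w$ (initially, the two chains leaving $c$, so $w = c$), then $f_u = g_u(x_w)$ and $f_v = g_v(x_w)$, hence $f_u = (g_u \circ g_v^{-1}) \circ f_v$, so $u$ and $v$ are equivalent up to the reversible map $g := g_u \circ g_v^{-1}$; Algorithm~\ref{algo-para} may then substitute $x_u \mapsto g(x_v)$ in all out-neighbors of $u$ and delete $u$ (it has lost its only accessible neighbor — note $u \neq v$ and $u$ is not central, so it has a unique out-neighbor along its chain). I would then argue that after this single merge, the successor of $u$ (call it $u'$) now has $v$ as in-neighbor — but $v$ already has its own successor $v'$ as out-neighbor with the \emph{same} unary relationship, so $u'$ and $v'$ now share the in-neighbor $v$, and the hypothesis of the zipping lemma is re-established one step further along. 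Iterating, the two chains are merged position by position from $c$ outward until the shorter of the two chains is exhausted; its surplus tail on the longer chain survives untouched. This shows two cycles of lengths $n_1, n_2$ reduce to a single cycle of length $\max(n_1,n_2)$, and the central automaton's local function is only modified by reversible substitutions, so the resulting network is again a Boolean TC (now with $k-1$ cycles) of the stated form. The induction hypothesis then finishes: the whole network collapses to a single cycle of length $\max_i n_i$, which is the size of the largest cycle of $F$.

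I would also need to dispatch the tangent and verify it contributes nothing extra: on the tangent every automaton (except possibly where chains rejoin) likewise has a unary reversible local function, but tangent automata are shared by all cycles and are not duplicated, so the zipping only ever acts on the post-central branches; after all cycles are merged the tangent is simply the initial segment of the single surviving cycle, and the final size counts both the tangent and the non-tangent part of the longest cycle — which is exactly the length of the largest cycle. A small case check is needed when one cycle is entirely contained in the tangent region or when $\ell = 0$ (as in Figure~\ref{fig-appalgo}); these are degenerate instances of the same argument. Finally, I would invoke Theorem~\ref{theorem-eq-limit} to note the limit dynamics are preserved throughout, although that is not part of the size claim per se.

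\textbf{Main obstacle.} The delicate point is the bookkeeping in the iterated zipping: one must check that after substituting $x_u \mapsto g(x_v)$, the \emph{new} in-neighbor relation of the successor automaton really does have the right unary form with a reversible constant, so that the lemma re-applies — in particular that composing the accumulated reversible corrections never produces a constant function, and that the interaction digraph after each deletion is still a TC with the chains properly aligned to the central automaton. Making this loop invariant precise (and handling the moment when the shorter chain runs out, where the longer chain's tail must be shown to remain a valid chain attached to the now-merged vertex) is where the real care is required; everything else is a routine consequence of the Boolean unary structure of non-central TC automata.
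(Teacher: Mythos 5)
Your proposal is correct and follows essentially the same route as the paper: both rest on the observation that every non-central automaton of a Boolean TC has a unary, non-constant (hence identity-or-negation, hence reversible) local function, so automata sharing an in-neighbor are equivalent up to some $g$ and can be merged, and both iterate this zipping outward from the central automaton until only the longest cycle remains. The only difference is organizational — you zip two chains at a time and induct on the number of cycles, while the paper merges all automata at the same distance from the central automaton simultaneously — which does not change the substance of the argument.
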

\begin{proof}
Starting from the initial TC $F$, all of the automata directly influenced by the
automata at the end of the tangent $u$ (but that are not $u$)
have local functions
$f_v(x) = g(x_u), f_w(x) = h(x_u), \ldots$ for $g, h, \ldots : \Sigma = \{0,1\}\to\Sigma$.
All these functions $g, h, \ldots$ are not constant functions, since the automata
that they represent are influenced by an automaton by hypothesis. Thus, they can
only be defined as the identity or the negation of $x_u$. As
a consequence, all but one of these automata
will be removed by the algorithm as they are all equivalent up to some $g$. \smallskip

This same argument can be repeated by taking all the automata influenced by the
only automaton resulting from the previous iteration, excluding the central
automaton. At each step, all of the automata at the same distance from the
central automaton are merged. Hence, at the end of this process,
whatever the choices made for merging automata along the iterative process,
the resulting AN will be compoesed of $k$ automata, with $k$ the length of
the largest cycle of $F$.
\qed
\end{proof}

\section{An application: disjunctive double cycles}
\label{sec-dc}

As an application of this algorithm and as an example to the algorithm's
capacities to reduce the size of the provided network, we turn to the family
of disjunctive double cycles. Notice that the result still holds for conjunctive
double cycles since conjuctive and disjunctive cycles have isomorphic
dynamics~\cite{T-Noual2012,n12b}.
\parspacing

In disjunctive automata networks, an edge $(u, v)$ is signed positively if the
$x_u$ appears as a positive variable in $f_v$.
An edge $(u, v)$ is signed negatively if $x_u$ appears as a negative
variable in $f_v$.
A cycle is said to be positive if it contains an even number of negative
edges, and negative otherwise. \parspacing

A \emph{disjunctive double cycle} is an automata network with an interaction
digraph that is composed of two automata cycles that intersect in one
automaton. The local function of this central automaton is a disjunctive clause.
This family of networks is very simple to define, and is a simple and intuitive
next step after the family of Boolean automata cycles, which are composed of a single
cycle. \parspacing

Both families have been characterized under the parallel update schedule~\cite{T-Noual2012,J-Demongeot2012};
that is to say, given basic parameters concerning the size of the
cycles, their sign, and any integer $k$,
an explicit formula (defined as a polynomially computable function) has
been given among other to count the number of limit cycles of size $k$ of such
networks under the parallel update schedule.
In this section, we extend this characterization to
the block-sequential equivalents
by showing how applying our algorithm reduces the
network to a smaller instance of the same family of networks. \parspacing

Furthermore, as Boolean automata cycles and disjunctive double
cycles are TCs, our method can be simplified to the simple following rules: 
given a TC $F$,
a block-sequential update schedule $\Delta$, count the
number of $<$-edges in the update digraph $G_U(F_\Delta)$;
for every cycle, substract
to its size the number of such edges it contains, while keeping its sign;
the final network, under the parallel update schedule, and the initial
network under $\Delta$ have isomorphic limit dynamics. This is a simple
application of Theorem~\ref{theorem-size-reduction}, and of the rule of
thumb deduced from Algorithm~\ref{algo-TC}. \parspacing

We denote by $DC(s, s', a, b)$ the disjunctive double cycles with cycle sizes
$a, b$ and signs $s, s'$.

\begin{theorem}
Let $D = DC(s, s', a, b)$ be disjunctive double cycles, $\Delta$ a
block-sequential update schedule. For $A$ ($B$ respectively) the number of $<$-edges on the
cycle of size $a$ ($b$ respectively) in $G_U(F_\Delta)$, the limit dynamics of
$D_\Delta$ is isomorphic to that of
$D'_\pi$, where $D' = DC(s, s', a - A, b - B)$.
\end{theorem}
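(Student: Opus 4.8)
The plan is to observe that disjunctive double cycles are TCs with a tangent of length $0$: the interaction digraph consists of two cycles of sizes $a$ and $b$ sharing exactly the central automaton, whose local function is a disjunctive clause on the two in-neighbors (one from each cycle). So the characterization should follow almost immediately from Theorem~\ref{theorem-size-reduction} applied to the TC $D_\Delta$, once we verify two things: (i) that the number of reductions along the cycle of size $a$ (respectively $b$) is exactly $A$ (respectively $B$), and (ii) that the resulting parallel TC is again a \emph{disjunctive} double cycle, and that its two cycles have sizes $a - A$ and $b - B$ with unchanged signs $s, s'$.

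First I would recall from the proof of Theorem~\ref{theorem-size-reduction} that each reduction corresponds to exactly one $<$-edge of $G_U(F_\Delta)$, and that when the $<$-edge $(u,<,v)$ lies on a cycle, that cycle loses exactly one automaton. Since every $<$-edge of $D_\Delta$ lies on the cycle of size $a$ or on the cycle of size $b$ (the tangent has length $0$, so there are no tangent edges), and since the $A$ edges on the $a$-cycle and the $B$ edges on the $b$-cycle are disjoint, the $a$-cycle is reduced precisely $A$ times down to size $a - A$, and the $b$-cycle precisely $B$ times down to size $b - B$. Here I would note the harmless edge case: a cycle cannot be reduced below size $1$, but a block-sequential schedule never annotates all edges of a cycle with $<$ (that was exactly the acyclicity argument for the $<$-graph being a forest), so $A < a$ and $B < b$, and the statement is well posed.

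Next I would argue that signs and the disjunctive character are preserved. The substitutions performed by Algorithm~\ref{algo-partiel} along a chain of the form $f_u(x) = g(x_w)$ with $g$ the identity or a negation simply compose these unary maps; composing negations multiplies the sign contributions, so the total sign of each cycle is invariant under the reductions — the reduced $a$-cycle still has sign $s$ and the reduced $b$-cycle still has sign $s'$. For the central automaton: its local function starts as a disjunctive clause $x_p \vee x_q$ (up to signs of $p,q$), and the only substitutions that touch it replace an in-neighbor variable $x_u$ by $g(x_v)$ where $g$ is the identity or a negation; a disjunctive clause with possibly-negated literals remains a disjunctive clause with possibly-negated literals, so $D'$ is genuinely a disjunctive double cycle $DC(s, s', a-A, b-B)$. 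Finally, Theorem~\ref{theorem-eq-limit} (which applies to Algorithm~\ref{algo-TC}, being a restriction of Algorithm~\ref{algo-para}) gives that the limit dynamics of $D_\Delta$ and $D'_\pi$ are isomorphic, completing the argument.

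The main obstacle I anticipate is bookkeeping rather than conceptual: one must carefully check that a $<$-edge incident to the central automaton is counted toward the correct cycle and does not somehow "disappear" when it is the edge $(central, <, v)$ handled by the first case of the proof of Theorem~\ref{theorem-size-reduction} (where either $u$ or $v$ may be deleted) — in that case $v$ lies on exactly one of the two cycles, and whichever of $u,v$ is removed, that cycle shrinks by one and no cross-contamination between the two cycles occurs because the tangent has length $0$ and the two cycles meet only at the central node. Spelling out this case, together with the sign-composition check, is the only place where a little care is needed; everything else is a direct invocation of Theorems~\ref{theorem-size-reduction} and~\ref{theorem-eq-limit}.
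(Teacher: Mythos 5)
Your proposal is correct and follows the same route as the paper, whose entire proof is the single sentence that this is a straightforward application of Theorem~\ref{theorem-size-reduction} (together with the rule of thumb that each cycle shrinks by its number of $<$-edges while keeping its sign). Your additional bookkeeping — assigning each $<$-edge to the correct cycle, noting $A<a$ and $B<b$ via acyclicity of the $<$-graph, and checking that sign and the disjunctive form of the central clause are preserved under the unary substitutions — just makes explicit what the paper leaves implicit.
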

\begin{proof}
This is a straightforward application of Theorem~\ref{theorem-size-reduction}.\qed
\end{proof}

\section{Conclusion}

In this paper we provide a novel algorithm which allows the reduction in size
of automata networks, in particular by passing the network from a
block-sequential to a parallel update schedule, while keeping isomorphic
limit dynamics. While this algorithm is too
computationally expensive for the general case, we study the specific family
of intersection of automata cycles, on which this algorithm is
easily applied. This study allows the discovery that all block-sequential
tangential cycles have isomorphic limit dynamics to parallel
tangential cycles. Finally, we apply this fact to Boolean automata double cycles
to characterize their behavior under block-sequential update schedules. \parspacing

It seems now clear to us that the difference between the parallel update
schedule and block-sequential update schedules is that the latter changes
the timing of the information along sections of the network. In particular,
structures such as tangential cycles can be directly
translated into an equivalent parallel network with shorter cycles.
We are interested in seeing what effects this translation could have in
a more general set of families of networks, and if there exists other families
in which block-sequential update schedules lead to equivalent parallel networks
which are still part of the family. \parspacing

As a perspective, we would like to characterize more redundancies that
can be removed from networks to help with the computation of their dynamics.
For example, we are currently interested in more complex compositions
of automata cycles, and have already found equivalences that show that many
networks are equivalent in their limit dynamics where complex parts of
automata networks can be moved alongside cycles without affecting the network's
limit dynamics. \parspacing

Isolated paths are also a strong candidate for size reduction. Isolated
paths are paths that lead from cycles to other cycles but can only
be crossed once. Our current
algorithms conserve such paths, despite it being possible to reduce them
completely without changing the limit dynamics of the network in many cases,
for example when an isolated path is the only way to go from one part to
another.
We have to be careful when multiple isolated paths exit from and join onto
the same parts, as the synchronicity of the information in the entire network
must be preserved.

\bigskip
\noindent
\small
\textbf{Acknowledgments.}
This work has been partially funded by ANR-18-CE40-0002 FANs project
(PP \& SS), ECOS-Sud CE19E02 SyDySy project (PP \& SS), and STIC-AmSud
22-STIC-02 CAMA project (SS).

\bibliographystyle{plain}
{\small{\bibliography{bloc}}}


\end{document}